\providecommand{\keywords}[1]{\textbf{\textit{Index terms---}} #1}
\newcommand\customfont[1]{{\usefont{T1}{custom}{m}{n} #1 }}
\newcommand{\6}{\mathbf }
\newcommand{\hash}{{\sf{Hash}}}
\newcommand{\concat}{{\sf{Concat}}}
\newcommand{\merkle}{{\mathcal{T}}}
\newcommand{\mroot}{{\merkle.\sf{Root}}}
\newcommand{\cmt}{\mathcal{T}'}
\newcommand{\cmtroot}{\cmt.\sf{Root}}
\newcommand{\adversary}{\text{\large \customfont{A}}}
\newcommand{\player}{\text{\large \customfont{P}}}
\newcommand{\oracle}{\text{\large \customfont{O}}}
\newcommand{\code}{\mathcal C}
\definecolor{myblue}{RGB}{0,0,205}
\definecolor{mygreen}{RGB}{80,160,80}
\definecolor{myred}{RGB}{178,34,34}
\definecolor{mygray}{RGB}{211,211,211}
\newtheorem{Def}{Definition}
\newtheorem{Prob}{Problem}
\newtheorem{Prop}{Proposition}
\newcommand{\mymid}[1]{\hspace{1mm}\left|\hspace{1mm}#1\right.}
\newcommand{\ie}{\textit{i.e.}}
\newcolumntype{Y}{>{\Centering\arraybackslash}X}
\newcolumntype{C}{>{\centering\arraybackslash}p{0.12\textwidth}}
\newcolumntype{D}{>{\centering\arraybackslash}p{0.275\textwidth}}
\newcolumntype{E}{>{\centering\arraybackslash}p{0.05\textwidth}}
\newcolumntype{F}{>{\centering\arraybackslash}p{0.15\textwidth}}
\newcolumntype{G}{>{\centering\arraybackslash}p{0.115\textwidth}}
\pgfplotsset{compat=1.17}
\begin{document}

\title{Analysis of a blockchain protocol based on LDPC codes}

\author{
Massimo Battaglioni\inst{1}
\and
Paolo Santini \inst{1}
\and
    Giulia Rafaiani\inst{1}
\and
   Franco Chiaraluce\inst{1}
\and
    Marco Baldi\inst{1}
}

\institute{
  Department of Information Engineering, Universit\`{a} Politecnica delle Marche, Ancona, 60131, Italy (e-mail: { \{m.battaglioni, p.santini, g.rafaiani, f.chiaraluce, m.baldi\}@univpm.it}).
  }

\authorrunning{Battaglioni et al.}

\titlerunning{Analysis of a blockchain protocol based on LDPC codes}

\maketitle

\begin{abstract}

In a blockchain Data Availability Attack (DAA), a malicious node publishes a block header but withholds part of the block, which contains invalid transactions. Honest full nodes, which can download and store the full blockchain, are aware that some data are not available but they have no formal way to prove it to light nodes, \ie, nodes that have limited resources and are not able to access the whole blockchain data.
A common solution to counter these attacks exploits linear error correcting codes to encode the block content.
A recent protocol, called SPAR, employs coded Merkle trees and low-density parity-check codes to counter DAAs. In this paper, we show that the protocol is less secure than claimed, owing to a redefinition of the adversarial success probability. As a consequence we show that, for some realistic choices of the parameters, the total amount of  data downloaded by light nodes is larger than that obtainable with competitor solutions.
\end{abstract}

\keywords{Blockchain, data availability attacks, LDPC codes, SPAR protocol.}

\section{Introduction}
\label{sec:introduction}
A blockchain can be seen as an ordered list of blocks,  each containing a set of transactions occurred among the participants of a peer-to-peer network. The recent discovery of Data Availability Attacks (DAAs) represents a new threat against blockchain security. Since the DAA introduction in \cite{Musta2019}, there has been a growing research interest in finding efficient countermeasures to this type of attacks, possibly leading to new blockchain models with improved scalability and security (e.g., \cite{yu2020coded, Dolececk2020, albassam2019lazyledger, Cao2020}).

In fact, scalability, which is related to the ability of supporting large transaction rates, represents one of the main issues in most existing blockchains \cite{Zhou2020}.
The straightforward solution of increasing the block size raises a series of further concerns. 
In fact, the larger the block size the smaller the number of nodes able to download the full blockchain and, indeed, to participate in the network as \textit{full nodes}, verifying the validity of new blocks and of every contained transaction. More peers would rather participate in the network as \textit{light nodes}, which, due to their limited resources, store only a squeezed version of the blockchain  \cite{Nakamoto} and consequently cannot autonomously verify the validity of transactions.
Light nodes aim at downloading as less data as possible. For instance, they may store only the block headers, which unambiguously identify the content of the blocks. However, in a setting with relatively few full nodes, collusion among them is more probable; this makes light nodes more susceptible to DAAs.
In fact, the aim of a DAA is to make at least one light node accept a block which has not been fully disclosed to the network.
This can happen if and only if honest full nodes are prevented from preparing \textit{fraud proofs}, \textit{i.e,}, demonstrations that the block is invalid \cite{yu2020coded, al2021fraud}.

One of the most promising countermeasures to DAAs consists in encoding the blocks through some error correcting code.
Encoding introduces redundancy and distributes the information of each transaction across all the codeword symbols, so that recovering a small portion of an encoded block may be enough to retrieve the entirety of its contents through decoding.
This strategy, combined with a sampling process in which light nodes ask for fragments of an encoded block and then gossip them to full nodes, ensures that malicious block producers are forced to reveal enough pieces of the invalid block \cite{al2021fraud}.
An alternative to transactions encoding is to change the protocol in such a way that a group of light nodes can collaboratively (among themselves) and autonomously (from full nodes) verify blocks \cite{Cao2020}.  Another option is to decouple the consensus rules from the transaction validity rules \cite{albassam2019lazyledger}. 

In a recent paper \cite{yu2020coded}, Yu et al. proposed SPAR, a blockchain protocol which uses Low-Density Parity-Check (LDPC) codes to counter DAAs; LDPC codes for this specific application have then been studied in \cite{Dolececk2020,Mitra2021}. 
SPAR comes as an improvement of the protocol in \cite{al2021fraud} using two-dimensional Reed-Solomon codes, whose parameters have been optimized in \cite{santini2022optimization}.
The authors of SPAR study the protection against DAAs in case the adversary aims to prevent honest full nodes from successfully decoding the block, which is a strict requirement to settle a proper fraud proof. 
In \cite{yu2020coded}, this situation is investigated assuming the adversary operates by withholding pieces of the encoded block; under a coding theory perspective, this gets modeled as a transmission over an erasure channel.
They conclude that, unless the adversary is able to find stopping sets (which is a NP-hard problem \cite{Krishnan2007}), SPAR guarantees that the success probability of a DAA is sufficiently small even when light nodes download a small amount of data besides the block header. 
As a consequence, SPAR claims improvements in all the relevant metrics \cite[Table 1]{yu2020coded}. 

\paragraph*{Our contribution}

In this paper we study the security of the SPAR protocol.
Namely, we recompute the adversarial success probability with the consideration that deceiving \emph{at least} a single light node is a success for the attacker, which is the same scenario considered in \cite{al2021fraud}.
This yields a sampling cost that is much larger than the expected one, thus penalizing the light nodes participating in the network. Moreover, we show that the total amount of data that light nodes have to download (header size plus sampling cost) is actually larger than that of competing solutions such as \cite{al2021fraud}.

\paragraph*{Paper organization}

The paper is organized as follows. In Section \ref{sec:notation} we describe the notation and some background. In Section \ref{sec:spar} we introduce a general framework to study DAAs. In Section \ref{sec:daa} we provide some numerical results. Finally, in Section \ref{sec:concl} we draw some conclusions.

\section{Notation and background}
\label{sec:notation}
In this section we establish the notation used throughout the paper, and recall some background notions.

\subsection{Mathematical notation}

Given two integers $a$ and $b$, we use $[a , b]$ to indicate the set of integers $x$ such that $a\leq x \leq b$. 
For a set $A$, we use $|A|$ to denote its cardinality.
We denote with $\mathbb F_q$ the finite field with $q$ elements.
Given a vector $\6v$, we use $\mathrm{supp}(\6v)$ to denote its support, \emph{i.e.}, the set containing the positions of its non-zero entries and $w_\mathrm{H}(\6v)$ to denote its Hamming weight, that is, the size of its support. Given an integer $l$ and a set $A$, $A^l$ is the set of vectors of length $l$ taking entries in $A$. Given a matrix $\6M$, $m_{i,j}$ denotes its entry at row $i$ and column $j$,  $\mathbf{M}_{i,:}$ denotes the $i$-th row, and $\mathbf{M}_{:,j}$ denotes the $j$-th column. Given a set $A$, $\6M_{:,A}$ (respectively, $\6M_{A,:}$) represents the matrix formed by the columns (respectively, rows) of $\6M$ indexed by $A$.

We denote by $\concat$ the string concatenation function and by $b(\cdot)$ the binary entropy function. Moreover, we denote by $\hash$ a cryptographic hash function, with codomain $D$. Given some vector $\6a$, we use $\mathcal T(\6a)$ to denote a generic hash tree structure constructed from $\6a$ and using  $\hash$ as underlying function.
The root of the tree is denoted as $\mathcal T.{\sf{Root}}(\6a)$; it generically takes values in $D^t$ and is a one-way function.
With analogous notation, by $\mathcal T.\textsf{Proof}(\6a,i)$ we refer to the proof that the $i$-th entry of $\6a$ is a leaf in the base layer of the tree. 
Notice that, when the hash function $\hash$ is properly chosen, then for any pair of strings $\6a\neq \6a'$ we have $\mathcal T.{\sf{Root}}(\6a)\neq \mathcal T.{\sf{Root}}(\6a')$ and, for any index $i$,
$\mathcal T.{\sf{Proof}}(\6a,i)\neq \mathcal T.{\sf{Proof}}(\6a',i)$ with overwhelming probability (say, not lower than $1-2^{-256}$ for modern hash functions); therefore, for the sake of simplicity, in the following we assume the absence of root and proof collisions. 
%

\subsection{LDPC codes}\label{subsec:ldpccodes}

LDPC codes are a family of linear codes characterized by parity-check matrices having a relatively small number of non-zero entries compared to the number of zeros.
Namely, if an LDPC $\6H\in\mathbb F_q^{r\times n}$ has full rank $r<n$ and row and column weight in the order of $\log(n)$ and $\log(r)$, respectively, then it defines an LDPC code with length $n$ and dimension $k = n-r$.
The associated code is $\code = \left\{\6c\in\mathbb F_q^n\mymid{ \6c\6H^\top = \60}\right\}$, where $^\top$ denotes transposition. 
The rows of the parity-check matrix define the code \emph{parity-check equations}, that is,
\begin{equation}
\label{eq:paritychecks}
\sum_{j = 1}^n c_j h_{i,j} = 0,\hspace{2mm}\forall i \in [1,r],\hspace{2mm}\forall \6c\in\code.
\end{equation}
Equivalently, any code can be represented in terms of a generator matrix $\6G\in\mathbb F_q^{k\times n}$, which forms a basis for $\code$.

In an Erasure Channel (EC), some of the codeword symbols are replaced with the erasure symbol $\epsilon$.  
To this end, we express the action of an EC as $\6c+\6e'$, where $\6c$ is the input sequence and $\6e'\in\{0,\epsilon\}^n$, with $\epsilon$ such that $\epsilon+a = \epsilon$, $\forall a\in\mathbb F_q$. 
A decoding algorithm for the EC aims to obtain a codeword by substituting each erasure with an element from $\mathbb F_q$.
 In the case of LDPC codes, the most common decoder used over the EC is the peeling decoder \cite{Luby}.  
This algorithm works by expressing \eqref{eq:paritychecks} as a linear system, where the unknowns are exactly the erased symbols.  Due to the sparsity of $\6H$, with large probability the linear system will include several univariate equations, \ie, containing only one erasure. Each of these equations can be solved to compute the corresponding unknown, which is then substituted into all the other equations.  This procedure is iterated until all the unknowns are found or, at some point, the linear system does not contain any univariate equation, \ie, all the unsolved equations contain at least two unknowns. 
In the former case we have a decoding success, while in the latter case we have a failure, due to a \textit{stopping set} \cite{Richurba}, \ie, a set of symbols participating to parity-check equations containing at least two unknowns each. If all the symbols forming a stopping set are erased, peeling decoding fails. The \emph{stopping ratio} $\beta$ of an LDPC code is defined as the minimum stopping set size divided by $n$. 

\subsection{Components of the SPAR protocol}\label{subsec:sparcompon}

 SPAR is based on a novel hash tree called Coded Merkle Tree (CMT), combined with an ad-hoc  \textit{hash-aware} peeling decoder. 

\paragraph*{\textbf{Coded Merkle Tree}} A CMT is a hash tree which is constructed from $\ell$ linear codes $\{\code^{(1)}, \cdots,\code^{(\ell)}\}$ over $\mathbb F_q$; the $i$-th code has length $n_i$ and dimension $k_i$.
Each code $\code^{(i)}$ is defined by the systematic generator matrix $\6G^{(i)}=[ \6I_{k_i} \hspace{0.5mm}|\hspace{0.5mm} \6A_i ]$, with $\6A_i\in\mathbb F_q^{k_i\times (n_i-k_i)}$ and $\6I_{k_i}$ being the identity matrix of size $k_i$. 
The CMT uses an integer $b$ which must be a divisor of all blocklength values $n_1,\cdots,n_\ell$.
Furthermore, one needs to have partitions for the sets $[1 , n_i]$, for $i \in [1 , \ell-1]$.
Namely, we have $\mathcal S_i = \left\{S_1^{(i)},\cdots,S_{k_{i+1}}^{(i)}\right\}$ which is a partition of $[1 , n_i]$, such that the $S_j^{(i)}$ are all disjoint and each one contains $b$ elements, since $k_{i+1}=n_i / b$.
Starting from $\6c\in\code^{(1)}$, we build the associated CMT $\cmt(\6c)$ as follows:
\begin{enumerate}
    \item set $i = 1$;
    \item for $j\in\{1,\cdots,k_{i+1}\}$, set $$u_j =  \concat\left(\hash\left(c^{(i)}_{z_1}\right),\cdots,\hash\left(c^{(i)}_{z_b}\right)\right),$$ with $\{z_1,\cdots,z_b\} = S^{(i)}_j$;
    \item encode $\6u = [u_1,\cdots,u_{k_{i+1}}]$ as $\6c = \6u\6G^{(i+1)}$;\footnote{Notice that, when LDPC codes are considered, encoding is conveniently performed using the parity-check matrix rather than the generator matrix. This implementation detail does not affect the conclusions of our  analysis but, considering encoding with the parity-check matrix, we would unnecessarily burden the notation. Therefore, we stick to encoding with the generator matrix.} 
    \item if $i<\ell-1$, increase $i$ and restart from step 2), otherwise set $\cmtroot(\6c) = \6u$.
\end{enumerate}

\paragraph*{\textbf{Hash-aware peeling decoder}}
A hash aware peeling decoder, described in \cite[Section 4.3]{yu2020coded}, is an algorithm that decodes a set of $\ell$ words which are expected to constitute a CMT.
Namely, let $\{\6x^{(1)},\cdots,\6x^{(\ell)}\}$, where $\6x^{(i)} \in \{\mathbb F_q\cup \epsilon\}^{n_i}$, be the words to be decoded. 
The hash-aware peeling decoder works in a top-down fashion and, at every iteration, uses the peeling decoder strategy (\ie, recover erasures that participate in univariate parity-check equations) for any layer of the CMT.
Additionally, the hash-aware peeling decoder verifies the consistency between symbols of connected layers of the tree via hash functions, whilst the symbols are recovered. 
Decoding fails whenever a stopping set or a failed parity-check equation is met, just like the conventional peeling decoder. Furthermore, the hash-aware peeling decoder fails in case check consistency fails for some layer. Finally, an undetected error is met (but not recognized by the decoder) if the decoded sequence is a codeword, but not the original one. 

\section{A general framework to study DAAs}
\label{sec:spar}
In this section we present a general framework to study DAAs, and then apply it to the SPAR protocol.
For brevity, we only give the fundamentals of the model; for further details concerning DAAs, we refer the interested reader to \cite{yu2020coded,al2021fraud}.

\subsection{A general model for DAAs}\label{subsec:highle}

We consider a game in which an adversary \adversary exchanges messages with $m$ players $\player_{1},\cdots,\player_{m}$, who cannot communicate one each other. 
Each player has access to an oracle \oracle, who can only perform polynomial time operations.
Every list of transactions is seen as a vector $\6u\in\mathbb F_q^k$.
We assume that the following information is publicly available:
\begin{enumerate}
\item[-] a validity function $f:\mathbb F_q^k\mapsto\{\sf{False}, \sf{True}\}$, which depends on the blockchain rules and on its current status;
\item[-]  two hash trees $\mathcal T, \mathcal T'$; 
\item[-] a $k$-dimensional code $\code\subseteq \mathbb F_q^n$ with generator matrix $\6G$.
\end{enumerate}
The game proceeds as follows:
\begin{enumerate}
\item \adversary chooses $\6u\in\mathbb F_q^k$ such that $f(\6u) = \sf{False}$ and $\tilde{\6c}\in\mathbb F_q^n$;
\item \adversary challenges the players with $(h_u, h_c)$, where $h_u = \mathcal T.{\sf{Root}}(\6u)$, $h_c = \mathcal T'.{\sf{Root}}(\tilde{\6c})$;
\item each player $\mathcal P_i$ selects $J_i\subseteq [1,n]$ with size $s$;
\item \adversary receives $U = \bigcup_{i = 1}^m J_i$;
\item to reply to a query containing the index $i$, \adversary must send $\{\tilde{c}_i, \mathcal T.{\sf{Proof}}(\tilde{\6c},i)\}$; \adversary is free to choose which queries to reply and which ones to neglect;
\item if a player does not receive a valid reply for any of his queries, then he discards $(h_u, h_c)$;
\item the players gossip all the valid answers to \oracle, which aims to produce a proof for one of the following facts:
\begin{enumerate}
\item[a) ] $\exists \tilde{\6c}\not\in\code$, such that $\mathcal T'.{\sf{Root}}(\tilde{\6c}) = h_c$;
\item[b) ] $\exists  \tilde{\6c}\in\code$ such that $\mathcal T'.{\sf{Root}}( \tilde{\6c}) = h_c$, $ \tilde{\6c} = \tilde{\6u}\6G$ and $\mathcal T.{\sf{Root}}(\6u)\neq h_u$;
\item[c) ] $\exists  \tilde{\6c}\in\code$ such that $\mathcal T'.{\sf{Root}}( \tilde{\6c}) = h_c$, $ \tilde{\6c} = \tilde{\6u}\6G$, $\mathcal T.{\sf{Root}}(\tilde{\6u}) = h_u$ and $f(\tilde{\6u}) = \sf{False}$. 
\end{enumerate}
\end{enumerate}

Let us also define two properties.

\begin{Def}
\textbf{Soundness:} if a player accepts $(h_u, h_c)$, then \oracle will be able to recover $\tilde{\6c}$ (and $\tilde{\6u}$) within a finite maximum delay.
\label{def:sound}
\end{Def}

\begin{Def}
\textbf{Agreement:} if a player accepts $(h_u, h_c)$, then all the other players will accept $(h_u, h_c)$ within a finite maximum delay.
\label{def:agree}
\end{Def}

Clearly, if \adversary wins the game, which happens with probability $\gamma$, soundness and agreement are caused to fail. We denote by $\gamma$ the Adversarial Success Probability (ASP), \ie, the probability that \adversary wins a random execution of the game.

It can be easily seen that, in our model, the players $\player_1,\cdots,\player_m$ correspond to the light nodes connected to a malicious node modeled by \adversary.
The oracle \oracle instead represents the fact that we assume any light node must be connected to at least one honest full node wishing to broadcast fraud proofs.
We remark that the hypotheses and properties that underlie our model are the same under which DAAs have been studied in the literature \cite{yu2020coded, al2021fraud, Dolececk2020, santini2022optimization}.
Finally, our model does not fix any hash tree, nor code family; thus, it can be used to study several blockchain networks.
We now proceed by describing how SPAR adapts to such a model, but it can be easily seen that also the protocol proposed in \cite{al2021fraud} fits into the model.

\subsection{DAAs in the SPAR protocol}\label{subsec:spadaa}

In SPAR, the CMT is instantiated using the code design procedure considered in \cite{Luby}, which produces an ensemble of LDPC codes whose parity-check matrices have at most column weight $v$ and at most row weight $w$.
As mentioned in Section \ref{subsec:sparcompon}, besides the CMT, SPAR requires the use of another hash tree, denoted by $\merkle$ and considered as a standard Merkle tree.

Let $\6u\in\mathbb F_q^k$ denote the list of transactions of a new block.
Then, a correctly constructed header contains $h_u = \mroot(\6u)$ and $h_c = \cmtroot(\6c)$, with $\6c = \6u\6G^{(1)}$.
However, in case of a DAA, the word $ \tilde{\6c}=\6c+\6e$ upon which $h_c$ is constructed may be any vector picked from $\mathbb F_q^n$. 
The authors of SPAR study the protection of the protocol against DAAs; namely, they initially consider the following two cases:
\begin{enumerate}
    \item[a) ] if $ \tilde{\6c}^{(i)}\not\in\code^{(i)}$, then the proof consists in sending the value of all the symbols that participate in a failed parity-check equation, except for one of them, together with their CMT proofs; we refer to such a proof as \textit{parity-check equation incorrect-coding proof};
    \item[b) ] if $\tilde{\6c}=\6c$ but $f(\6u) = {\sf{False}}$, the adversary succeeds only if the samples received by the oracle are not enough to allow the recovery of $\6u$ from $\tilde{\6c}$ through decoding.
\end{enumerate}
The scenario where the oracle finds a hash inconsistency is also considered, in which case \oracle can broadcast a fraud proof to the light nodes, called here \emph{hash inconsistency incorrect-coding proof}. 
 
The following bound for the ASP is derived \cite[Theorem 1]{yu2020coded}:
\begin{equation}
\label{eq:claimed_asp}
\gamma \leq \max\left\{(1-\alpha_{\rm min})^s\hspace{1mm},\hspace{1mm}2^{\max_{i}\{b(\alpha_i)n_i+ms\log(1-\alpha_i)\}}\right\}
\end{equation}
where $\alpha_i$ is the \emph{undecodable ratio} of $\code^{(i)}$, that is, the minimum fraction of coded symbols the adversary needs to make unavailable in order to prevent the oracle from full decoding, $\alpha_{\rm min} = \min_{i}(\alpha_i)$, and $s$ is the number of queries performed by each light node. Therefore, if the oracle is not able to decode due to the presence of a stopping set, the adversarial success probability computed in \cite{yu2020coded} is the probability that \emph{exactly} one player receives an answer to all its queries.

We argue here, instead, that a sufficient condition to break the soundness and agreement as defined in \cite{al2021fraud,yu2020coded}, and recalled in Section \ref{subsec:highle} is actually that \emph{at least} one player accepts a block which is invalid.

\begin{Prop}
In SPAR, an adversary cannot cause the soundness and agreement to fail with probability lower than 
\begin{equation}
\label{eq:refor_claimed_asp}
\gamma \leq \min\{1, \max\{1-(1-(1-\alpha_{\min})^s)^m\hspace{1mm},\hspace{1mm} t_2\}\},
\end{equation}
where $t_2=2^{\max_{i}\{b(\alpha_i)n_i+ms\log(1-\alpha_i)\}}$.
\end{Prop}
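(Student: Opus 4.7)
The plan is to revisit the Yu et al. derivation of the ASP while replacing the ``exactly one player is deceived'' event with the broader ``at least one player is deceived'' event, which is what actually violates soundness and agreement as stated in Definitions \ref{def:sound} and \ref{def:agree}. The second term $t_2$ is inherited verbatim from \cite[Theorem 1]{yu2020coded}, since it arises from a union bound over large stopping sets that is not affected by the semantic change; the trailing $\min\{1,\cdot\}$ is just the trivial cap, needed because the new first term can exceed $1$ for large $m$.

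First, I would fix the adversary's strategy space. After committing to $(h_u,h_c)$ at step 2 of the game, the adversary sees the union of all queries $U=\bigcup_{i=1}^m J_i$ and chooses a subset $H\subseteq[1,n]$ of positions to withhold (the remaining queries are answered truthfully, so the corresponding Merkle proofs verify). For \oracle{} to be unable to recover $\tilde{\6u}$ from the revealed symbols via the hash-aware peeling decoder, $H$ must contain a stopping set in at least one layer of the CMT; the smallest possible fractional size of such an $H$ is $\alpha_{\min}=\min_i\alpha_i$.

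Next, I would compute the probability that at least one player accepts, under the optimal strategy of hiding a set of fractional size $\alpha_{\min}$. For a fixed player $\player_j$, the event ``all $s$ queries fall outside $H$'' has probability $(1-\alpha_{\min})^s$ (under the independent uniform sampling model used in \cite{yu2020coded}). Since the $m$ players sample independently, the probability that every player is rejected is $\bigl(1-(1-\alpha_{\min})^s\bigr)^m$, so the probability that at least one is deceived is $1-\bigl(1-(1-\alpha_{\min})^s\bigr)^m$. This supplies the first argument of the outer $\max$.

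Finally, I would import the second-term bound $t_2$ from \cite[Theorem 1]{yu2020coded} without modification: it bounds, via a union bound over all stopping sets of size $\alpha_i n_i$ in each layer $i$ (using $\binom{n_i}{\alpha_i n_i}\le 2^{b(\alpha_i)n_i}$) times the probability $(1-\alpha_i)^{ms}$ that no query among the $ms$ total queries hits the chosen set, the contribution coming from withholding larger sets. Taking the worst of the two strategies and capping at $1$ gives \eqref{eq:refor_claimed_asp}.

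The main subtlety is justifying that the correction affects only the first term: I need to argue that the union-bound argument behind $t_2$ was already phrased in terms of the existence of \emph{some} hidden stopping set avoided by \emph{some} samples, so it already upper-bounds the ``at least one'' event and needs no change, whereas the original first term $(1-\alpha_{\min})^s$ was tacitly measuring the single-player acceptance probability and must be lifted to the $m$-player disjunction via the standard $1-(1-p)^m$ calculation. A secondary point requiring care is the independence of the $m$ players' samples, which is the same assumption underlying \cite{yu2020coded,al2021fraud} and is preserved in the game of Section \ref{subsec:highle}.
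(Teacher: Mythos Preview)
Your proposal is correct and follows essentially the same approach as the paper: both identify that the single-player acceptance probability $(1-\alpha_{\min})^s$ must be lifted to the $m$-player ``at least one accepts'' probability $1-\bigl(1-(1-\alpha_{\min})^s\bigr)^m$ via independence, and both defer the $t_2$ term unchanged to \cite[Theorem 1]{yu2020coded}. If anything, you are more explicit than the paper in justifying why only the first term requires correction and why the $\min\{1,\cdot\}$ cap appears.
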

\begin{proof}
According to Definition \ref{def:sound}, the soundness fails if at least a player accepts the block header, but the oracle will not be able to dispatch a fraud proof. The probability that exactly one player accepts the challenge is lower than or equal to $(1-\alpha_{\min})^s$ and, therefore, the probability that exactly one player discards the challenge is larger than $1-(1-\alpha_{\min})^s$. Considering that there are $m$ players, the probability that all of them discard the block is larger than $\left[1-(1-\alpha_{\min})^s\right]^m$. So, finally, the probability that at least a player accepts the block is lower than
\[1-\left[1-(1-\alpha_{\min})^s\right]^m.\]
The rest of the proof is as in \cite[Theorem 1]{yu2020coded}.
\end{proof}



\section{Numerical examples}\label{sec:daa}

Let us consider the code parameters  proposed in \cite{yu2020coded} as a benchmark. It is shown in \cite[Table 2]{yu2020coded} that the most favourable value of the stopping ratio of the constructed ensemble ($\beta^*$)   is obtained when $w=8$ and the code rate is $R=1/4$, from which $v=6$ easily follows. As in \cite{yu2020coded} we consider two cases: a \emph{strong adversary} (SA) able to find stopping sets and erase the corresponding symbols, and a \emph{weak adversary} (WA) unable to find them and hence forced to erase random symbols. For the SA, the undecodable ratio is $\alpha^*=\beta^*=12.4\%$; in case of WA, we instead have $\alpha^*=47\%$\cite{yu2020coded}.
According to \cite[Table 2]{yu2020coded}, when $n=4096$, the probability that the code stopping ratio $\alpha$ is smaller than the ensemble stopping ratio is relatively small ($3.2\cdot10^{-4}$). 

In Table \ref{tab:gampers} we report the upper bound \eqref{eq:claimed_asp} and the newly assessed upper bound \eqref{eq:refor_claimed_asp} on the ASP, for some values of $s$, considering $n=4096$ and $m=1024$; notice that the new value is never smaller than the previously computed upper bound. Clearly, this may have sever security consequences.

\begin{table}[tbh!]
\centering
\caption{Values of \eqref{eq:claimed_asp} and \eqref{eq:refor_claimed_asp} for $m=1024$, $n=4096$.}
\begin{tabular}{|c||cc||cc|}
\hline
\multirow{2}{*}{$s$} & \multicolumn{2}{c||}{Upper bound on $\gamma$ \cite{yu2020coded}}                    &  \multicolumn{2}{c|}{New upper bound on $\gamma$} \\ \cline{2-5}         
                     & \multicolumn{1}{c|}{WA \eqref{eq:claimed_asp}}      & SA
                     \eqref{eq:claimed_asp}   & \multicolumn{1}{c|}{WA \eqref{eq:refor_claimed_asp}}      & SA \eqref{eq:refor_claimed_asp}         \\ \hline \hline
$8$                  & \multicolumn{1}{c|}{$6.23\cdot10^{-3}$}   & $\approx 1$                & \multicolumn{1}{c|}{$\approx 1$}   & $\approx 1$               \\ \hline
$35$                 & \multicolumn{1}{c|}{$2.24\cdot10^{-10}$} & $9.72\cdot10^{-3}$   & \multicolumn{1}{c|}{$2.29\cdot10^{-7}$}   & $\approx 1$               \\ \hline
$200$                & \multicolumn{1}{c|}{$\approx 0$}  & $3.17\cdot10^{-12}$  & \multicolumn{1}{c|}{$\approx 0$}   & $3.24\cdot10^{-9}$             \\ \hline
$2000$               & \multicolumn{1}{c|}{$\approx 0$}         &$\approx 0$   & \multicolumn{1}{c|}{$\approx 0$}   & $\approx 0$       \\ \hline
\end{tabular}
\label{tab:gampers}
\end{table}

 Conversely, once a target adversarial success probability is chosen, it is possible to compute a lower bound number of samples $s$ each player needs to ask for in order to stay below it, by inverting \eqref{eq:claimed_asp} and \eqref{eq:refor_claimed_asp}. Considering the same parameters as above ($n=4096$ and $m=1024$) we obtain the results in Table \ref{tab:s}.

\begin{table}[tbh!]
\centering
\caption{Values of $s$ (obtained by inverting \eqref{eq:claimed_asp} and \eqref{eq:refor_claimed_asp}) for $m=1024$, $n=4096$ and different values of $\gamma$.}
\begin{tabularx}{0.8\textwidth}{|Y||YY||YY|}
 \hline
\multirow{2}{*}{$\gamma$} & \multicolumn{2}{c||}{Lower bound on $s$ \cite{yu2020coded}}                    &  \multicolumn{2}{c|}{New lower bound on $s$} \\ \cline{2-5}         
                     & WA      & SA   & WA      & SA        \\ \hline \hline
$10^{-2}$                 & \multicolumn{1}{c}{$8$} & $35$   & \multicolumn{1}{c}{$19$}   & $88$               \\ \hline
$10^{-5}$                & \multicolumn{1}{c}{$19$}  & $87$  & \multicolumn{1}{c}{$30$}   & $140$             \\ \hline
$10^{-10}$             & \multicolumn{1}{c}{$37$}         & $174$   & \multicolumn{1}{c}{$48$}   & $227$       \\ \hline
\end{tabularx}
\label{tab:s}
\end{table}

 We notice that the actual number of samples asked by each node is much larger than expected, resulting in a larger sampling cost $S$, which increases linearly with $s$ as follows \cite{yu2020coded}
 \[
 S=s\left(\frac{B}{k}+\left[y(b-1)+yb(1-R)\right]\log_{bR}\frac{k}{Rt}\right),
 \]
 where $B$ is the block size, $y$ is the hash size and $b$ is the number of batched hashes in each layer. The header size is $H=t\ell_{\mathcal{H}}$, where $\ell_{\mathcal{H}}=256$ is the binary length of the digests.

 We assess the sampling cost $S$, normalized with respect to the block size $B$, considering $m=1024$, $R=1/4$, $k=1024$ symbols, $B=1\hspace{1mm}\mathrm{MB}$, $b=8$ and $t=256$ hashes and some different values of the ASP $\gamma$, in Table \ref{tab:samplingcost}. A comparison with the optimized ASBK protocol \cite{santini2022optimization} is also reported, for which we have considered the same block size, and codes defined over a field of size $2^{256}$. As expected, the optimized ASBK protocol results in smaller sampling costs than the SPAR protocol (this also held true for the original ASBK protocol \cite[Fig. 4]{yu2020coded}.)

\begin{table}[tbh!]
    \centering
    \caption{Sampling cost $S$ normalized to the block size $B$ for $m=1024$, $n=4096$ and different values of $\gamma$.}
    \begin{tabular}{| E | | G G | | G G | | D |}
    \hline
    \multirow{2}{*}{$\gamma$}&\multicolumn{2}{c||}{Lower bound on $S/B$ \cite{yu2020coded}}&\multicolumn{2}{c||}{Lower bound on $S/B$ \cite{yu2020coded}}&{Lower bound on $S/B$ \cite{santini2022optimization}}\\\cline{2-6} &WA&SA&WA&SA&-  \\ \hline \hline
$10^{-2}$                 & $0.0233$ &  {$0.1019$}   & $0.0553$   &  {$0.2563$}     &   {$0.0278$}        \\ \hline
$10^{-5}$                & {$0.0533$}  &  {$0.2534$}  & {$0.0874$}   &  {$0.4077$}     &       {$0.0358$} \\ \hline
$10^{-10}$             & $0.1078$         &  {$0.5068$}   & {$0.1398$}   &  {$0.6611$}   &   {$0.0435$}  \\ \hline
    
    \end{tabular}
\label{tab:samplingcost}
\end{table}

However, it should be noticed that SPAR has the advantage of relying on a fixed header size whereas in ASBK the header size increases as the square root of the block size. Therefore, considering the same setting, we have compared the total amount of downloaded data $D$ (sampling cost plus header size) using SPAR, to that obtained using the optimized ASBK protocol in Tables \ref{tab:totalcost1}, \ref{tab:totalcost2} and \ref{tab:totalcost3}, where we have also reported the header size $H$ for the optimized ASBK protocol, when $B=1$ MB, $B=10$ MB and $B=100$ MB, respectively. The header size for SPAR does not depend on the block size and its value is $t\ell_{\mathcal{H}}=8.192$ kB. Notice that this amount of data must be downloaded by any light node during the regular course of the protocol, independently of the malicious behaviour of some full nodes, possibly resulting in the additional download of fraud proofs.
 
\begin{table}[tbh!]
    \centering
    \caption{Total amount of downloaded data normalized to the block size $B=1$ MB for $m=1024$ and different values of $\gamma$.}
    \begin{tabular}{| E | | C C | | D | | F |}
    \hline
    \multirow{2}{*}{$\gamma$}&\multicolumn{2}{c||}{New lower bound on $D/B$}&{Lower bound on $D/B$ \cite{santini2022optimization}}&$H$ [kB] \cite{santini2022optimization}\\\cline{2-5} &WA&SA&-&-  \\ \hline \hline
$10^{-2}$                 &  {$0.0635$} & $0.2645$   &              $0.0454$ & \multirow{3}{*}{$20.411$} \\ \cline{1-4} 
$10^{-5}$                &  {$0.0956$}  &  $0.4159$ &   $0.0544$     &       \\ \cline{1-4}
$10^{-10}$             &  {$0.148$}         &  $0.6693$  &  $0.0639$  &\\ \hline

    \end{tabular}
\label{tab:totalcost1}
\end{table}

\begin{table}[tbh!]
    \centering
    \caption{Total amount of downloaded data normalized to the block size $B=10$ MB for $m=1024$, $n=4096$ and different values of $\gamma$.}
    \begin{tabular}{| E | | C C | | D | | F |}
    \hline
    \multirow{2}{*}{$\gamma$}&\multicolumn{2}{c||}{New lower bound on $D/B$}&{Lower bound on $D/B$ \cite{santini2022optimization}}&$H$ [kB] \cite{santini2022optimization}\\\cline{2-5} &WA&SA&-&-  \\ \hline \hline
$10^{-2}$                 &  {$0.009$} & $0.0386$   &              $0.0099$ & \multirow{3}{*}{$52.244$} \\ \cline{1-4} 
$10^{-5}$                &  {$0.0137$}  &  $0.0609$ &   $0.0123$     &       \\ \cline{1-4}
$10^{-10}$             &  {$0.0214$}         &  $0.0983$  &  $0.0154$  &\\ \hline
    \end{tabular}
\label{tab:totalcost2}
\end{table}

\begin{table}[tbh!]
    \centering
    \caption{Total amount of downloaded data normalized to the block size $B=100$ MB for $m=1024$, $n=4096$ and different values of $\gamma$.}
    \begin{tabular}{| E | | C C | | D | | F |}
    \hline
    \multirow{2}{*}{$\gamma$}&\multicolumn{2}{c||}{New lower bound on $D/B$}&{Lower bound on $D/B$ \cite{santini2022optimization}}&$H$ [kB] \cite{santini2022optimization}\\\cline{2-5} &WA&SA&-&-  \\ \hline \hline
$10^{-2}$                 &  {$0.0012$} & $0.0051$   &              $0.0022$ & \multirow{3}{*}{$158.03$} \\ \cline{1-4} 
$10^{-5}$                &  {$0.0018$}  &  $0.008$ &   $0.0025$     &       \\ \cline{1-4}
$10^{-10}$             &  {$0.0028$}         &  $0.013$  &  $0.0031$  &\\ \hline
    \end{tabular}
\label{tab:totalcost3}
\end{table}

We observe that, for relatively small and moderate values of the block size, despite the larger header size, the use of the ASBK protocol is preferable even if a weak adversary is taken into account. Instead, when the block size is large, SPAR is very convenient in the presence of a weak adversary, but still more costly than ASBK if the adversary is strong.

\section{Conclusion}\label{sec:concl}

By carefully analyzing the SPAR protocol  we have shown that the actual sampling cost required by the scheme, in order to achieve target security guarantees, is much larger than that initially expected. Moreover, it is shown that, in many practical scenarios, the quantity of data light nodes have to download is larger than that of other well-known schemes. 

\bibliographystyle{IEEEtran}
\bibliography{References}

\end{document}